\documentclass[smallextended,natbib]{svjour3}
\smartqed  
\usepackage{amsmath,amsfonts,amssymb}
\usepackage{graphicx,mathrsfs,hyperref,xcolor}


\def \ie {i.e.~}

\def \RR {\mathbb{R}}
\def \F {\mathcal{F}}
\def \DUMMY {\tilde V}
\newcommand{\review}[1]{\textcolor{black}{#1}}
\newcommand{\reviewd}[1]{\textcolor{black}{#1}}

\newcommand{\NomProblem}[1]{B$k$-WWC\,}
\begin{document}

\title{An Exact Approach for the Balanced \review{k-Way} Partitioning \review{Problem with Weight Constraints and its Application to Sports Team Realignment} 
\thanks{A preliminary version of this paper appeared at ISCO 2016}}
\thanks{A preliminary version of this paper appeared at ISCO 2016}
\author{Diego Recalde \and Daniel Sever\'in \and \mbox{} Ramiro Torres \and Polo Vaca}

\institute{D. Recalde \and R. Torres \and P. Vaca \at Escuela Polit\'ecnica Nacional, Departamento de Matem\'atica, Ladr\'on de Guevara E11-253, EC170109 Quito, Ecuador. \\ \email{\{diego.recalde, ramiro.torres, polo.vaca\}@epn.edu.ec}\and D. Recalde \at Research Center on Mathematical Modelling (MODEMAT), Escuela Polit\'ecnica Nacional, Quito, Ecuador.  \and D. Sever\'in \at FCEIA, Universidad Nacional de Rosario and CONICET, Argentina.\\ \email{daniel@fceia.unr.edu.ar}}

\date{Received: date / Accepted: date} 

\maketitle

\begin{abstract}

In this work a balanced $k$-way partitioning problem with weight constraints  is defined to model \review{the} sports team realignment. Sports teams must be partitioned into a fixed number of groups according to some regulations, where the total distance of the road trips that all teams must travel to play a Double Round Robin Tournament in each group is minimized. \review{Two integer programming formulations  for this problem are introduced, and the validity of three families of inequalities associated to the polytope of these formulations is proved. The performance of a tabu search procedure and a Branch \& Cut algorithm, which uses the valid inequalities as cuts, is evaluated over simulated and real-world instances.} In particular, an optimal solution
for the realignment of the Ecuadorian Football league is reported and the methodology can be suitable adapted for the realignment of other sports leagues. 
   
\keywords{Integer programming models \and Graph partitioning \and Tabu search \and Sports team realignment.}
\end{abstract}
\section{Introduction}
 \label{introduction}

\review{A fundamental problem in Combinatorial Optimization is to partition a graph in several parts. There are many different versions of graph partitioning problems depending on the number of parts required, the type of weights on the edges or nodes, and the inclusion of several other constraints like restricting the number of nodes in each part. Usually, the objective of these problems is to divide the set of nodes into subsets with a strong internal connectivity and a weak connectivity between them. Most versions of these problems are known to be NP-hard. In this paper, a problem consisting of partitioning a complete or a general graph in a fixed number of subsets of nodes such that their cardinality differ at most in one and the total weight of each subset is bounded, is introduced. The objective aims to minimize the total cost of edges with end-nodes in the same subset. The motivation to state this problem was the realignment of the second category of the Ecuadorian football league.}

The sports team realignment deals with partitioning a fixed number of professional sports teams into groups or divisions of similar characteristics, in which a tournament is played. Commonly, a geographical criterion is used to construct the divisions in order to minimize the sum of intra-divisional travel costs. The  Ecuadorian football federation (FEF) adapted the last benchmark and imposes that the realignment of the  second category of the professional football league must be made by considering provinces instead of teams. Thus, in the realignment, the provinces are divided into four geographical zones. In each zone, two \emph{subgroups} of teams are randomly constructed by regarding the two best teams of each province, satisfying that two teams of the same province do not belong to the same subgroup, and every subgroup must have the same number of best and second-best teams whenever possible. The eight subgroups play a Double Round Robin Tournament, where the champion of each subgroup and the four second-best teams  with the highest scores advance to another stage of the championship. In 2014, FEF managers requested to the authors of this work how to make the realignment of the second category of the Ecuadorian football league in an optimal way, \review{considering 21 provinces and 42 teams to be partitioned in 4 groups.}   \review{Prior to this requirement, they made a realignment in which the total road travel distance was 39830.4 km. However, the optimal solution reduced this road travel distance in 1532.4 km with the corresponding highly relevant logistics and economical benefits for the large number of people involved. A graphical representation  is depicted in Figure \ref{figura_solucion_inicial}. In a preliminary work \cite{ISCO2016}, the problem was modeled as a  $k$-clique partitioning problem with constraints on the sizes and weights of the cliques and formulated as an integer program.  Unfortunately,  there was an instance related to a proposal to change the design of the realignment  that could not be solved to optimality. Consequently, the present paper aims to improve the previous practical results and provide more theoretical insightful about this problem.} 

\begin{figure}[!htb] \centering
\includegraphics[scale=0.85]{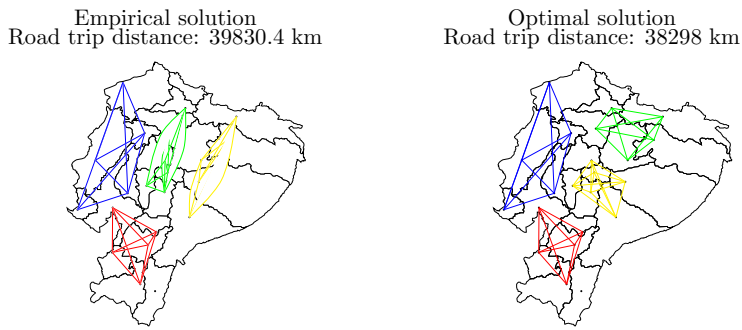}
\caption{Empirical vs optimal solution, 2014 edition}
\label{figura_solucion_inicial}
\end{figure}


The proposal for realignment the second category of the Ecuadorian football league addressed in the previous work, and revisited in this paper, consists of directly making the divisions by using teams instead of provinces, as it is done in other international leagues. During the 2015 season of the second category of the Ecuadorian football league, 44 teams participated (22 provincial associations) and the set of teams was divided into $4$ groups with $5$ teams and $4$ groups with $6$ teams. In the context of this problem, the distance between two teams is defined as the road trip distance between the venues of the teams. The strongness or weakness of a team is quantified by means of a parameter that measures its football level considering the historical performance of each team. Thus, the problem studied in this paper consists of partitioning the set of teams into $k$ groups such that \review{ the number of teams in each group differs at most by one,  there exists a certain homogeneity of football performance among the teams in each class of the partition, and minimizing an objective function corresponding to the total geographical distance intra-group.}

The sports team realignment problem has been modeled in different ways and for different leagues in other countries.  A quadratic binary programming model is set up to divide 30 teams, of the National Football League (NFL) in the United States, into 6 compact divisions of 5 teams each \citep{Saltzman1996}. The results, obtained directly from a nonlinear programming solver, are considerably less expensive for the teams in terms of total intra-divisional travel, in comparison with the realignment of the 1995 edition of this league. On the other hand, \citet{MacDonaldPulleyblank2014}  propose a
geometric method to construct realignments for several sports leagues
in the United States: NHL, MLB, NFL and NBA. The authors claim that
with their approach they always find the optimal solution. To prove
this, they solve mixed integer programming problems
corresponding to practical instances, using CPLEX.

In the case that it is possible to divide the teams into divisions of equal size,
this problem can be modeled as a $k$\textit{-way equipartition problem}: given an undirected graph with $n$ nodes and edge costs, the problem consists of finding a $k$-partition of the set of nodes, each of the same size,
such that the total cost of edges which have both endpoints in one of
the subsets of the partition is minimized. An example of this case was given by \citet{Mitchell2001}, who optimally solved the realignment of the
NFL in the United States for $32$ teams and $8$ divisions by using a branch-and-cut
algorithm. Moreover, the author shows that the 2002 edition of the NFL could have reduced the sum of intra-divisional travel distances by $45\%$.

When $n\!\mod k \neq 0$, the sports team realignment problem can be modeled as a \textit{Clique Partitioning Problem} (CPP) with constraints in the sizes of the cliques \citep{Mitchell2007},
as we will see in the next section.
The CPP has been extensively studied in the literature. This graph optimization problem was introduced by \citet{GROTSCHEL}
to formulate a clustering problem. They studied it from a polyhedral point of view and its theoretical results have been used in
a cutting plane algorithm that includes heuristic separation routines for some classes of facet-defining inequalities.
\citet{Ferreira} analyzed the problem of partitioning a graph satisfying capacity constraints on the sum of the node weights
in each subset of the partition.
\citet{Jaehn} proposed a branch-and-bound algorithm where tighter bounds for each node in the search tree are reported.
Additionally, a variant where constraints on the size of cliques are introduced for the CPP is studied by \citet{LABBE}
and a competitive branch-and-cut for a particular problem based on it have been implemented \citep{LABBE2}. \review{Regarded to applications of the Graph Partitioning Problem, it is widely known that the canonical application of this problem is the distribution of work to processors of a parallel machine \citep{HENDRICKSON_2000}. Other well known applications include VLSI circuit design \citep{VLSI_2011}, Mobile Wireless Communications \citep{Fairbrother_EtAl_2017} and the already mentioned sports team realignment \citep{Mitchell_2003}. For a complete survey of applications and recent advances in graph partitioning, see \cite{Buluc2016}. }

\review{This paper proposes two integer formulations for the $k$\textit{-way equipartition problem} which are formally defined in Section \ref{formulacion}. Moreover, valid inequalities for the polytopes of these formulations are derived in Section \ref{valid_inequalities}, which are  integrated in a Branch \& Bound \& Cut scheme to solve to optimality instances of 54 teams and, in particular, a hard real-world instance
of 44 teams not solved in a previous work}. Additionally, a tabu search method for finding feasible high quality solutions is shown in Section \ref{SectionHeuristic}. In Section \ref{SectionPracticalExperience}, the tabu search method and the usage of valid inequalities are integrated in several strategies to solve the real-world as well as the simulated instances. Finally, 
Section \ref{Conclusions} concludes the paper with some remarks.

\section{Problem definition and integer programming formulations}
\label{formulacion}
By associating the venues of teams with the nodes of a graph, the distance between venues with costs on the edges, and football performance
of the teams with weights on the nodes, \reviewd{a typical realignment problem} can be modeled as a $k$-way partitioning
problem with constraints on the size (number of nodes in each subset differs at most in one) and weight of the cliques (total sum of node
weights in the clique).  From now on, we refer to this problem as a \emph{balanced} $k$\emph{-\reviewd{way partitioning
problem with weight constraints} (B$k$-WWC)}.

Let $G=(V,E)$ be an undirected complete graph with node set
$V=\{1,\ldots,n\}$, edge set $E=\left\{\{i,j\}:i,j\in V, i \neq j\right\}$, cost
on the edges $d: E \longrightarrow \RR^+$,
weights on nodes $w:V \longrightarrow \RR^+$ and a fixed number
$k \geq 2$.  A $k$-\emph{partition} of $G$ is a collection of $k$ subgraphs $(V_1,E(V_1)),\ldots,(V_k,E(V_k))$ of $G$,
where $V_i \neq \emptyset$ for all $i=1,\dots, k$, $V_i\cap V_j= \emptyset$ for all $i\neq j$,
$\cup_{i=1}^k V_i=V$, and $E(V_i)$ is the set of edges with end nodes in $V_i$. \review{Note that all subgraphs $(V_i,E(V_i))$ are cliques since $G$ is a complete graph.} 
Moreover, let $W_L, W_U\in \RR^+$, $W_L\leq W_U$, be the lower and
upper bounds, respectively, for the weight of each clique (which is part of the input of our problem). The weight
of a clique is the total sum of the node weights in the
clique. Then, B$k$-WWC consists of finding a $k$-way
partition such that
\begin{align}
& W_L \leq \sum_{j\in V_c}w_j \leq W_U,& \quad & \forall \;
  c=1,\ldots,k,\label{cond_pesos}\\
&  \left| |V_i|-|V_j| \right| \leq 1,& \quad & \forall \; i,j =
  1,2,\ldots,k,\quad i< j,\label{cond_nodos}
\end{align}
and the total edge cost over all cliques is minimized.
In a previous work \citep{ISCO2016}, the NP-hardness of the B$k$-WWC was proved by a polynomial transformation from the 3-Equitable Coloring Problem.

It is known that algorithms based on integer programming techniques are proved to be the best tools for dealing with problems such as B$k$-WWC.
As it was mentioned in the introduction, another CPP was successfully addressed by \citet{LABBE2}, which used an integer programming formulation for the size-constrained clique partitioning problem given by \citet{LABBE}. In that formulation, a binary variable is defined for each edge. Let $x_{ij}$ be the variable associated to the edge $\{i,j\}$. Then, $x_{ij}=1$
if nodes $i$ and $j$ belong to the same clique, and $x_{ij}=0$ otherwise. The formulation is stated as follows:
\begin{align}
& \min \sum_{\{i,j\}\in E} d_{ij} x_{ij} \label{obj_1}\\
& \mbox{s.t.} \nonumber\\
& \hspace{0.4cm} x_{ij} + x_{jl} - x_{il} \leq 1,& \forall \; 1\leq i < j< l\leq n, \label{restr_11}\\
& \hspace{0.4cm} x_{ij} -x_{jl} +x_{il} \leq 1,& \forall \; 1\leq i < j< l\leq n,  \label{restr_12}\\
& - x_{ij} +x_{jl} + x_{il} \leq 1,& \forall \; 1\leq i < j< l\leq n,  \label{restr_13}\\
& F_L \leq 1 + \sum_{j:\{i,j\}\in E} x_{ij} \leq F_U,&\qquad \forall \; i\in V, \label{restr_2}\\
& \hspace{0.4cm} x_{ij} \in \{0,1\}, & \forall\; \{i,j\}\in E \label{restr_4}
\end{align}
The objective function \eqref{obj_1} seeks to minimize the total edge cost of the $k$ cliques.
Constraints \eqref{restr_11}, \eqref{restr_12}, and \eqref{restr_13} are the so-called triangle inequalities introduced by \citet{GROTSCHEL},
which guarantee that if three nodes $i,j,l$ of $V_c$ are linked by two edges $\{i,j\}$ and $\{j,l\}$, then the third edge $\{i,l\}$ must be
also included in the solution, for all $c$.
Constraints \eqref{restr_2} ensure that the cardinality of each clique is between values $F_L$ and $F_U$, and constraints \eqref{restr_4}
determine that all variables are binary.

In our case, in order to model the B$k$-WWC, the cardinality of any two cliques must differ at most in one, i.e., from now on, $F_L := \left \lfloor n/k \right \rfloor\geq 2$ and $F_U := \left \lceil n/k \right \rceil$.  
Moreover, additional constraints that impose the weight requirements on each clique are included:
\begin{equation}
W_L \leq  w_i + \sum_{j:\{i,j\}\in E} w_j x_{ij} \leq W_U, \qquad \forall\; i\in V, \label{restr_3}
\end{equation}

As the cardinality of each subset in the partition depends on $n$ and $k$, when $k$ divides $n$, the formulation \eqref{obj_1}-\eqref{restr_3}
returns $k$ cliques with exactly $n/k$ nodes and the problem corresponds  to the $k$-way equi-partition problem \cite{Mitchell2001}.
In this case, constraints (\ref{restr_2}) can be rewritten as:
\begin{equation}
\sum_{j:\{i,j\}\in E} x_{ij} = \frac{n}{k} - 1, \qquad \forall \; i\in V, \label{restr_2x}
\end{equation}

When $k$ does not divide $n$, the previous constraints are not enough to guarantee that integer solutions represent partitions  of $k$ cliques. \review{In fact, observe that  formulation \eqref{obj_1}-\eqref{restr_3} may admit feasible solutions for different values of $k$; for example, consider as an instance of B$k$-WWC a complete graph with $n=23$ and $k=7$, which implies that $F_L=3$ and $F_U=4$. However, a feasible solution for this instance could be a partition consisting of one subset of nodes with cardinality equal to $3$ and five subsets with cardinality equal to $4$, which corresponds to a non desired value of $k=6$.}

Two alternatives are proposed to overcome this issue. On the one hand, dummy nodes are added to the graph until the condition $n\!\mod k = 0$ is met, \ie a set of dummy nodes $\DUMMY$ of cardinality $k-(n\mod k)$ is defined. Moreover, for all $i\in \DUMMY$, costs $d_{ij}=0$ for
$j \in V \cup \DUMMY$, and weights $w_i = 0$ are fixed. After that, $V$ must be updated with $V\cup \DUMMY$, and the same for $E$. 
Finally, observe that two dummy nodes must not be assigned to the same clique in the partition. In order to avoid this impasse, the following constraint is considered:
\begin{equation}
\sum_{\{i,j\} \subset \DUMMY} x_{ij}=0
\label{rest_ceros}
\end{equation} 

We call $\F_1$ to the formulation of B$k$-WWC  composed by the objective function \eqref{obj_1} and constraints \eqref{restr_11}-\eqref{restr_13}, \eqref{restr_4}-\eqref{rest_ceros}.

\review{On the other hand}, an alternative to the inclusion of dummy nodes (as formulation $\F_1$ \reviewd{states}) is to consider a new
constraint as follows. Note that in a balanced partition of a graph, there exists $r\in \mathbb{N}$ and $k-r$ disjoint subsets of cardinality
$\left \lceil n/k \right \rceil$ and $\left \lfloor n/k \right \rfloor$, respectively. \reviewd{That is, $n=r F_U + \left(k-r \right) F_L$
where $r= n-k F_L = n \!\mod k$}.
\reviewd{Note also that the total number of edges} is $r F_U (F_U - 1)/2 + (k - r) F_L(F_L-1)/2$. Let $\beta_{n,k}$ be the last number. It is easy to see that $\beta_{n,k} = \beta_{n,k'}$ implies
$k = k'$, and therefore the following equality forces the partition to have size $k$:
\begin{equation}
 \sum_{\{i,j\}\in E}  x_{ij} = \beta_{n,k}
 \label{restr_5}
\end{equation}
We call $\F_2$ to the formulation composed by the objective function \eqref{obj_1} and constraints \eqref{restr_11}-\eqref{restr_3} and \eqref{restr_5}.

In some situations, the graph $G$ could be a non-complete one. For example, in \reviewd{real-world instances}, there \reviewd{would be} an
extra requirement where certain pairwise of nodes must not participate in the same clique.
This can be modeled by simply deletion of edges $\{u,v\}$ where $u$ and $v$ are those nodes that should be included in different cliques.
Such a problem will be known as the \emph{generalized balanced $k$-\reviewd{way} clique partitioning with weight constraints} (GB$k$-WWC). The last problem is not harder to solve than B$k$-WWC. In fact, given an instance of GB$k$-WWC defined by an arbitrary graph $G=(V,E)$, cost on the edges $d: E \longrightarrow \RR^+$, weights on nodes $w:V \longrightarrow \RR^+$, positive numbers $W_L$ and $W_U$, and a fixed integer number $k \geq 2$, an instance of B$k$-WWC can be constructed as follows: take a complete graph $G' = (V', E')$ with set of nodes
$V'=V$, weights $w'_i=w_i ~~\forall i\in V$, numbers $W'_L=W_L$, $W'_U=W_U$ , $k'=k$, and distances
$$
d^{\prime}_{ij}=
\begin{cases}
d_{ij}, \mbox{ if } \{i,j\} \in E\\
M, \mbox{ if } \{i,j\} \in E' \backslash E
\end{cases}
$$
where $M$ is a big number.
It is then obvious that GB$k$-WWC has an optimum solution if and only if the optimum of B$k$-WWC does not
exceed $M$. In practice, one does not have to deal with $M$. As in the case for dummy nodes, considering a constraint $\sum_{\{i,j\} \in E' \backslash E} x_{ij} = 0$ is enough.

\section{Valid inequalities}
\label{valid_inequalities}
\reviewd{Let $\mathcal{P}$ be the polytope defined by the convex hull of integer solutions of $\F_1$ (if $n\!\mod k = 0$) or $\F_2$ (otherwise).} $\mathcal{P}$ is uniquely determined by the parameters $n, k, W_L, W_U$ and $w_i$ for all $i\in V$.
If the weight constraints are redundant, the dimension of this polytope is given by Theorem 3.1 of \citet{LABBE} and the equations (\ref{restr_2x}) or (\ref{restr_5}) are enough to describe the minimal system of $\mathcal{P}$.
On the other hand, the weight constraints could make this polytope to be empty. In order to avoid these cases, a necessary condition on weights is established in the following result:
\begin{lemma}
\label{condicion_modelo}
A necessary condition for the feasibility of B$k$-WWC is
\begin{eqnarray*}
\max \left \{ 2, \left\lceil \frac{\sum_{i\in V}w_i}{W_U} \right\rceil \right \} \leq k \leq \min \left \{ \left \lfloor \frac{n}{2} \right \rfloor , \left\lfloor \frac{\sum_{i\in V}w_i}{W_L}\right\rfloor \right\}
\end{eqnarray*}
\end{lemma}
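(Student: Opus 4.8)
The plan is to verify separately each of the four bounds appearing in the claim, starting from an arbitrary feasible solution $V_1,\dots,V_k$ of B$k$-WWC — that is, a $k$-partition of $G$ satisfying the weight bounds \eqref{cond_pesos} and the balancing condition \eqref{cond_nodos} — and showing that its mere existence forces each bound.

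Two of the bounds come for free. The inequality $k \ge 2$ is part of the definition of the problem. The inequality $k \le \lfloor n/2 \rfloor$ is just the standing hypothesis $F_L = \lfloor n/k \rfloor \ge 2$ in disguise: since $k$ is a positive integer, $\lfloor n/k \rfloor \ge 2$ is equivalent to $n \ge 2k$, hence to $k \le \lfloor n/2 \rfloor$. Equivalently, every class of a balanced partition has at least $\lfloor n/k \rfloor \ge 2$ nodes, so $n = \sum_{c=1}^k |V_c| \ge 2k$.

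For the two remaining, weight-dependent bounds the key step is to aggregate the per-class inequalities \eqref{cond_pesos} by summing them over $c = 1,\dots,k$. Since $\{V_c\}_{c=1}^k$ partitions $V$, the middle terms add up to $\sum_{i\in V} w_i$, giving $k\,W_L \le \sum_{i\in V} w_i \le k\,W_U$. The left inequality yields $k \le (\sum_{i\in V} w_i)/W_L$ and, $k$ being an integer, $k \le \lfloor (\sum_{i\in V} w_i)/W_L \rfloor$; the right inequality yields $k \ge (\sum_{i\in V} w_i)/W_U$, hence $k \ge \lceil (\sum_{i\in V} w_i)/W_U \rceil$. Taking the larger of the two lower bounds and the smaller of the two upper bounds gives exactly the stated condition.

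I expect no real obstacle here: the proof is a straightforward aggregation of the defining constraints. The only points demanding care are the direction of rounding when passing from the rational estimates to the integer variable $k$, and the explicit use of the standing assumption $F_L \ge 2$ — without it the structural part of the argument would only give the weaker bound $k \le n$.
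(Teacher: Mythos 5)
Your proof is correct and follows essentially the same route as the paper: the bounds $k \geq 2$ and $k \leq \lfloor n/2 \rfloor$ come from the standing assumptions $k \geq 2$ and $F_L \geq 2$, and the weight-dependent bounds come from summing the per-class weight constraints over all $k$ classes to get $k W_L \leq \sum_{i \in V} w_i \leq k W_U$ and then rounding. You are merely more explicit than the paper about the rounding directions and about why $F_L \geq 2$ translates into $k \leq \lfloor n/2 \rfloor$.
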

\begin{proof}
Observe that by assumption $k\geq 2$ and $F_L\geq 2$. On the other hand, from constraints \eqref{restr_3},
\begin{align*}
& \sum_{c=1}^k W_L \leq  \sum_{c=1}^k (w_i + \sum_{\{i,j\}\in E} w_j x_{ij}) \leq \sum_{c=1}^k W_U,\\
&\sum_{c=1}^k W_L \leq \sum_{c=1}^k \sum_{i\in V_c} w_i \leq \sum_{c=1}^k W_U,\\
& k W_L \leq  \sum_{i\in V} w_i \leq k W_U,
\end{align*}
from which the result follows.
\end{proof}

Observe that \reviewd{$\mathcal{P}$} is contained in the one given by \citet{LABBE}. Hence, linear relaxations of our formulations can be strengthened by means of
known classes of valid and facet-defining inequalities described in previous works. This is the case of the 2-partition inequalities:
\begin{equation} \label{part2eq}
\sum_{i\in S} \sum_{j\in T} x_{ij} - \sum_{i_1, i_2\in S} x_{i_1 i_2} -  \sum_{j_1, j_2\in T} x_{j_1 j_2} \leq |S|,
\end{equation}
for every two disjoint nonempty subsets $S, T$ of $V$ such that $|S| \leq |T|$ and $S \cap T = \emptyset$.
These inequalities were introduced in the nineties by \citet{GROTSCHEL} for the Clique Partitioning Polytope and, in recent years, \citet{LABBE} explored these inequalities for their polytope.
Based on the computational experiments reported in these preceding works, the usage of 2-partition inequalities as cuts evidenced a good behavior and effectiveness in solving the IP model,
and they will be considered in the present paper.

\reviewd{In addition, new valid equations and inequalities arise by the introduction of weight constraints \eqref{restr_3}.
Below, three families of valid inequalities for $\mathcal{P}$ are proposed.
For any $T \subset V$, define $w(T) = \sum_{i \in T} w_i$. Also, for a given $k$-partition $\pi=(V_1,\ldots,V_k)$
define $E_{\pi}(T) = \cup_{i=1}^k E(V_i \cap T)$. That is, $E_{\pi}(T)$ contains all the edges of $\pi$ with end nodes in $T$.
Finally, any integer solution lying in a polyhedron $P$ is called a \emph{root} of $P$.
\begin{proposition}
Let $T \subset V$ such that $w(T) > W_U$. Then, the following $T$-Weight-Cover inequality is valid for $\mathcal{P}$:
\[ \sum_{\{i,j\} \in E(T)} x_{ij} \leq \dfrac{(|T|-1)(|T|-2)}{2}. \]
\end{proposition}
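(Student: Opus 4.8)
The plan is to prove the inequality on every root of $\mathcal{P}$; since $\mathcal{P}$ is the convex hull of its roots and the inequality is linear, validity over all of $\mathcal{P}$ then follows at once. The case $|T| \le 1$ is immediate, as both sides equal $0$, so assume $|T| \ge 2$. Fix a root $x$ of $\mathcal{P}$ and let $\pi = (V_1, \ldots, V_k)$ be the $k$-partition it represents. For each $c$ put $T_c = V_c \cap T$; then $T$ is the disjoint union of the $T_c$, so $\sum_{c=1}^k |T_c| = |T|$, and by definition of $E_{\pi}(T)$ one has $\sum_{\{i,j\} \in E(T)} x_{ij} = |E_{\pi}(T)| = \sum_{c=1}^k \binom{|T_c|}{2}$.

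The only consequence of feasibility I need is the weight restriction \eqref{restr_3}: for every clique $V_c$ of $\pi$ we have $w(T_c) \le w(V_c) \le W_U < w(T)$, so $T_c$ is a \emph{proper} subset of $T$ and hence $|T_c| \le |T| - 1$. It then remains to bound $\sum_{c=1}^k \binom{t_c}{2}$ over all nonnegative integer vectors $(t_1, \ldots, t_k)$ satisfying $\sum_c t_c = |T|$ and $t_c \le |T| - 1$ for all $c$. Writing $\sum_c \binom{t_c}{2} = \tfrac12\big(\sum_c t_c^2 - |T|\big)$ turns this into maximizing $\sum_c t_c^2$ under the same constraints, which I would settle by a standard exchange argument: given a maximizer with coordinates sorted decreasingly, if the largest coordinate were strictly smaller than $|T|-1$ while the second were positive, transferring one unit from the second coordinate onto the first would preserve feasibility and strictly increase $\sum_c t_c^2$, a contradiction. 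Hence an optimal vector has the form $(|T|-1,1,0,\dots,0)$ up to reordering, which yields $\sum_c \binom{t_c}{2} \le \binom{|T|-1}{2} = \dfrac{(|T|-1)(|T|-2)}{2}$, completing the argument.

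I do not anticipate a genuine obstacle: the proof is an upper-bound computation. The one point demanding a little care is correctly extracting ``$T$ does not fit into a single clique'' from the weight bound --- specifically that $w(T_c) \le W_U < w(T)$ forces $T_c \subsetneq T$, not merely $T_c \ne V_c$ --- since this is exactly where the hypothesis $w(T) > W_U$ enters. It is also worth noting (though not needed for validity) that the extremal vector $(|T|-1,1,0,\dots,0)$ shows the right-hand side is the smallest bound obtainable by this argument; whether the resulting inequality is facet-defining for $\mathcal{P}$ would be a separate question.
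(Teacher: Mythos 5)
Your proof is correct and follows essentially the same route as the paper's: restrict the partition to $T$, use $w(T)>W_U$ to rule out $T$ lying in a single clique, and bound the edge count by the extremal configuration with $|T|-1$ nodes in one part and one node elsewhere. The only difference is that you justify the maximization step rigorously via the convexity/exchange argument on $\sum_c t_c^2$, where the paper simply asserts that this configuration attains the maximum.
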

\begin{proof}
Let $x$ be a root of $\mathcal{P}$ representing a $k$-partition $\pi$.
The left side of the inequality is $|E_{\pi}(T)| = \sum_{i=1}^k |V_i \cap T|(|V_i \cap T|-1)/2$.
Since $w(T) > W_U$,  all nodes in $T$ do not belong to the same clique in $\pi$. That is, $T$ has nodes from two or more cliques in $\pi$
and the largest value of $|E_{\pi}(T)|$ is reached when $T$ has $|T|-1$ nodes belonging to a clique, say $V_{i_1}$, and just one node belonging
to another one, say $V_{i_2}$.
In that case, $|E(V_{i_1})| = (|T|-1)(|T|-2)/2$ and $|E(V_{i_2})| = 0$. Therefore,
$|E_{\pi}(T)| \leq (|T|-1)(|T|-2)/2$.
\end{proof}
\begin{corollary} \label{ZEROW}
Let $\{i,j\} \in E$ such that $w_i + w_j > W_U$. Then, $x_{ij} = 0$ is a valid equation of $\mathcal{P}$.
\end{corollary}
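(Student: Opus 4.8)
The plan is to derive Corollary \ref{ZEROW} as an immediate consequence of the $T$-Weight-Cover inequality by applying it to the two-element set $T = \{i,j\}$. First I would check the hypothesis: we are given $w_i + w_j > W_U$, which is exactly $w(T) > W_U$ for $T = \{i,j\}$, so the Proposition applies to this $T$.

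Next I would substitute $|T| = 2$ into the right-hand side of the $T$-Weight-Cover inequality, obtaining $(|T|-1)(|T|-2)/2 = (1)(0)/2 = 0$. The left-hand side is $\sum_{\{p,q\} \in E(T)} x_{pq}$, and since $E(T) = E(\{i,j\}) = \{\{i,j\}\}$, this sum is just the single term $x_{ij}$. Hence the Proposition yields $x_{ij} \leq 0$, valid for $\mathcal{P}$.

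Finally, since all variables satisfy $x_{ij} \geq 0$ (they are relaxations of the binary constraints \eqref{restr_4}), combining $x_{ij} \leq 0$ with $x_{ij} \geq 0$ forces $x_{ij} = 0$ on every root of $\mathcal{P}$, so $x_{ij} = 0$ is a valid equation. There is no real obstacle here; the only thing to be careful about is the implicit use of nonnegativity of the edge variables to upgrade the valid inequality $x_{ij} \leq 0$ to the valid equation $x_{ij} = 0$, which I would state explicitly rather than leave to the reader.
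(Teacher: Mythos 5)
Your proof is correct and takes essentially the same route the paper intends: the corollary is meant to follow from the $T$-Weight-Cover proposition applied to $T=\{i,j\}$, which gives $x_{ij}\leq 0$, and combined with $x_{ij}\in\{0,1\}$ on the roots of $\mathcal{P}$ this yields $x_{ij}=0$. Your explicit remark about using nonnegativity to upgrade the inequality to an equation is a reasonable touch that the paper leaves implicit.
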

\begin{proposition}
Let $T \subset V$ such that $w(T) > W_L$, $|T| \leq F_L$ and $r = w(T) - W_L$.
Then, for all $i\in T$, the following $(T,i)$-Weight-Lowerbound inequality is valid for $\mathcal{P}$:
\[ w_i + \sum_{j \in T \setminus \{i\}} w_j x_{ij} + \sum_{j \in V \setminus T} (w_j + r) x_{ij} \geq w(T). \]
\end{proposition}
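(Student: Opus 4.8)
The plan is to verify the inequality directly at every root of $\mathcal{P}$. So let $x$ be a root of $\mathcal{P}$ representing a $k$-partition $\pi=(V_1,\dots,V_k)$, fix the node $i\in T$, and let $V_c$ be the clique of $\pi$ containing $i$. The first step is to rewrite the left-hand side in a transparent form. Since $x_{ij}=1$ exactly for those $j\neq i$ with $j\in V_c$, and since $T\setminus\{i\}$ and $V\setminus T$ partition $V\setminus\{i\}$, one gets $w_i+\sum_{j\in T\setminus\{i\}}w_jx_{ij}=w(T\cap V_c)$ and $\sum_{j\in V\setminus T}(w_j+r)x_{ij}=w(V_c\setminus T)+r\,|V_c\setminus T|$; hence the left-hand side equals
\[ w(T\cap V_c)+w(V_c\setminus T)+r\,|V_c\setminus T| \;=\; w(V_c)+r\,|V_c\setminus T|. \]
It therefore remains to show $w(V_c)+r\,|V_c\setminus T|\geq w(T)$.

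I would then distinguish two cases according to whether $T$ is contained in the clique $V_c$. If $T\subseteq V_c$, then nonnegativity of the node weights gives $w(V_c)\geq w(T)$, and the extra term $r\,|V_c\setminus T|\geq 0$ only strengthens the bound, so we are done. If $T\not\subseteq V_c$, then $|T\cap V_c|\leq |T|-1$; combining this with the fact that every clique of $\pi$ has at least $F_L$ nodes (forced by the cardinality constraints of the formulation) and with the hypothesis $|T|\leq F_L$ yields $|V_c\setminus T|=|V_c|-|T\cap V_c|\geq F_L-(|T|-1)\geq 1$. Invoking the weight lower bound $w(V_c)\geq W_L$ from \eqref{restr_3} together with the definition $r=w(T)-W_L$, we obtain
\[ w(V_c)+r\,|V_c\setminus T| \;\geq\; w(V_c)+r \;\geq\; W_L+r \;=\; w(T), \]
which completes the argument; since the inequality holds at every root of $\mathcal{P}$, it is valid for $\mathcal{P}$.

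The computations are routine once the left-hand side has been recast as $w(V_c)+r\,|V_c\setminus T|$; the only delicate point is the case $T\not\subseteq V_c$, where one must be certain that the clique $V_c$ contains at least one node outside $T$, i.e. $|V_c\setminus T|\geq 1$. This is exactly where the hypothesis $|T|\leq F_L$ enters, in conjunction with the balancedness of $\pi$ (which guarantees $|V_c|\geq F_L$): it ensures that the coefficient $r$ attached to the ``outside-$T$'' variables is picked up at least once, which is precisely what is needed to compensate for the weight $w(T\setminus V_c)$ that $V_c$ may be missing relative to $T$. Dropping the condition $|T|\leq F_L$ would break this last step.
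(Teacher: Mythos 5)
Your proof is correct and follows essentially the same route as the paper's: both rewrite the left-hand side as $w(V_c)+r\,|V_c\setminus T|$ and then conclude via $w(V_c)+r\geq W_L+r=w(T)$ once $|V_c\setminus T|\geq 1$ is secured. The only (immaterial) difference is the organization of the case split --- the paper distinguishes $V_c\setminus T=\emptyset$ from $V_c\setminus T\neq\emptyset$ and uses $|T|\leq F_L$ to force $T=V_c$ in the first case, whereas you distinguish $T\subseteq V_c$ from $T\not\subseteq V_c$ and use $|T|\leq F_L$ together with $|V_c|\geq F_L$ to obtain $|V_c\setminus T|\geq 1$ in the second.
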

\begin{proof}
Let $x$ be a root of $\mathcal{P}$ representing a $k$-partition $\pi$ and w.l.o.g.$\!$ suppose that $i \in V_1$.
The left side of the inequality is $w(V_1) + r |V_1 \setminus T|$.
If $V_1 \setminus T = \emptyset$, we have $T = V_1$ since $|T| \leq F_L \leq |V_1|$, and the inequality is valid.
If $V_1 \setminus T \neq \emptyset$ then $w(V_1) + r |V_1 \setminus T| \geq w(V_1) + r \geq W_L + r = w(T)$.
\end{proof}
\begin{proposition}
Let $T \subset V$ such that $w(T) < W_U$,
$r = W_U - w(T)$ and $S = \{j \in V \setminus T : w_j > r\} \neq \emptyset$.
Then, for all $i\in T$, the following $(T,i)$-Weight-Upperbound inequality is valid for $\mathcal{P}$:
\[ w_i + \sum_{j \in T \setminus \{i\}} w_j x_{ij} + \sum_{j \in S} (w_j - r) x_{ij} \leq w(T). \]
\end{proposition}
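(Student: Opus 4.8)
The plan is to follow the same pattern as the proof of the $(T,i)$-Weight-Lowerbound inequality. Let $x$ be a root of $\mathcal{P}$ representing a $k$-partition $\pi=(V_1,\dots,V_k)$, and assume without loss of generality that $i\in V_1$. Because $x_{ij}=1$ precisely when $j$ lies in the same clique as $i$, i.e.~when $j\in V_1$, the left-hand side of the asserted inequality evaluates to $w_i+\sum_{j\in(T\setminus\{i\})\cap V_1}w_j+\sum_{j\in S\cap V_1}(w_j-r)$. The first two summands telescope to $w(T\cap V_1)$ since $i\in T\cap V_1$, so the left-hand side equals $w(T\cap V_1)+\sum_{j\in S\cap V_1}(w_j-r)$.

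Next I would split into two cases. If $S\cap V_1=\emptyset$, the left-hand side is just $w(T\cap V_1)\le w(T)$ and there is nothing more to do. If $S\cap V_1\neq\emptyset$, the crucial observation is that $S\subseteq V\setminus T$, hence $S\cap V_1\subseteq V_1\setminus T$, which gives $\sum_{j\in S\cap V_1}w_j\le w(V_1\setminus T)=w(V_1)-w(T\cap V_1)$. Applying the upper capacity constraint \eqref{restr_3} to clique $V_1$ yields $w(V_1)\le W_U$, so $\sum_{j\in S\cap V_1}w_j\le W_U-w(T\cap V_1)$. Substituting this into the expression for the left-hand side, it is bounded above by $W_U-r\,|S\cap V_1|\le W_U-r=w(T)$, where the last equality is the definition $r=W_U-w(T)$ and the preceding inequality uses $r>0$ together with $|S\cap V_1|\ge 1$.

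I do not expect a genuine obstacle here; the only point requiring care is the bookkeeping that rewrites the linear form as $w(T\cap V_1)+\sum_{j\in S\cap V_1}(w_j-r)$ and the remark that the presence of a single node of $S$ in $V_1$ already suffices to close the gap between $W_U$ and $w(T)$. It is worth noting that the hypothesis $w_j>r$ for $j\in S$ is exactly what makes every coefficient $(w_j-r)$ strictly positive, so that the inequality is a genuine strengthening; dropping it would still leave a valid, albeit weaker, inequality, since the extra terms would only decrease the left-hand side.
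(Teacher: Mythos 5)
Your proof is correct and follows essentially the same route as the paper's: you rewrite the left-hand side as $w(T\cap V_1)+w(S\cap V_1)-r|S\cap V_1|$, treat the case $S\cap V_1=\emptyset$ trivially, and otherwise bound via $w(T\cap V_1)+w(S\cap V_1)\le w(V_1)\le W_U$ and $|S\cap V_1|\ge 1$ with $r>0$. The only difference is cosmetic bookkeeping (you pass through $w(V_1\setminus T)$ explicitly), and your closing remark about the role of the hypothesis $w_j>r$ is a correct, if inessential, observation.
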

\begin{proof}
Let $x$ be a root of $\mathcal{P}$ representing a $k$-partition $\pi$ and w.l.o.g.$\!$ suppose that $i \in V_1$.
The left side of the inequality is $w(T \cap V_1) + w(S \cap V_1) - r |S \cap V_1|$.
If $S \cap V_1 = \emptyset$, we have $w(T \cap V_1) \leq w(T)$ and the inequality is valid.
If $S \cap V_1 \neq \emptyset$, then
$w(T \cap V_1) + w(S \cap V_1) - r |S \cap V_1| \leq w(V_1) - r |S \cap V_1| \leq W_U - r |S \cap V_1| \leq W_U - r = w(T)$.
\end{proof}
}
\reviewd{
The previous results just give conditions for the inequalities to be valid but we are also interested in finding those inequalities
that define faces of high dimension, preferably facets of $\mathcal{P}$, since one can reinforce linear relaxations with them.
This fact would require a deeper polyhedral study of $\mathcal{P}$. However, for practical purposes, it is enough to propose necessary
conditions that guarantee that faces defined by such inequalities have as many roots as possible (or at least be non-empty).
These conditions can be further used for the proper separation of the inequalities involved.
\begin{proposition}
Let $\mathcal{F}$ be the face defined by a $T$-Weight-Cover inequality.
If $\mathcal{F} \neq \emptyset$, then there exists $l \in T$ such that $w(T \setminus \{l\}) \leq W_U$.
\end{proposition}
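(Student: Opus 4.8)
The plan is to unfold the hypothesis ``$\mathcal{F}\neq\emptyset$'' into a statement about a concrete $k$-partition and then pin down exactly which partitions make the $T$-Weight-Cover inequality tight. Throughout, write $\binom{m}{2}$ for $m(m-1)/2$. First I would pick a root $x\in\mathcal{F}$ and let $\pi=(V_1,\dots,V_k)$ be the $k$-partition it represents. As in the proof of the $T$-Weight-Cover proposition, the left-hand side of the inequality evaluated at $x$ equals $|E_\pi(T)|=\sum_{c=1}^k\binom{a_c}{2}$, where $a_c:=|V_c\cap T|$; these numbers satisfy $\sum_{c=1}^k a_c=|T|$, and $x\in\mathcal{F}$ means $\sum_{c=1}^k\binom{a_c}{2}=\binom{|T|-1}{2}$. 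Since $w(T)>W_U$ while $w(V_c)\le W_U$ for every $c$, no single clique can contain all of $T$, so $a_c\le|T|-1$ for all $c$.

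The core of the argument is a combinatorial claim: a nonnegative integer vector $(a_c)$ with $\sum_c a_c=|T|$ and every $a_c\le|T|-1$ attains $\sum_c\binom{a_c}{2}=\binom{|T|-1}{2}$ only if one entry equals $|T|-1$, another equals $1$, and all the rest are $0$. I would establish this by an exchange/convexity step: if the largest entry, say $a_1$, were at most $|T|-2$, then since $\sum_{c\neq 1}a_c=|T|-a_1\ge 2$ some entry $a_j$ with $j\neq 1$ is positive; transferring one unit from $a_j$ to $a_1$ keeps all entries $\le |T|-1$ and changes $\sum_c\binom{a_c}{2}$ by exactly $a_1-a_j+1\ge 1>0$, contradicting the fact (from the proposition) that $\binom{|T|-1}{2}$ is the maximum possible value. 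Hence $a_1=|T|-1$, which forces exactly one of the remaining entries to equal $1$ and all others to vanish; this is the only non-routine point of the proof.

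Finally I would read off the conclusion: the above shows there is a clique $V_{i_1}$ with $|V_{i_1}\cap T|=|T|-1$, so all nodes of $T$ except one, call it $l$, lie in $V_{i_1}$; that is, $T\setminus\{l\}\subseteq V_{i_1}$. Since node weights are nonnegative, $w(T\setminus\{l\})\le w(V_{i_1})\le W_U$, which is the desired statement. The degenerate cases $|T|\le 2$ fit the same template (using that any single node $i$ obeys $w_i\le w(V_c)\le W_U$ for the clique $V_c$ containing it), so no separate treatment is really needed; everything outside the exchange step is bookkeeping.
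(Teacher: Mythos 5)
Your proposal is correct, and it follows the paper's overall template --- take a root of the face, show the tightness of the inequality forces $\pi$ restricted to $T$ to be a clique of size $|T|-1$ plus an isolated node $l$, then bound $w(T\setminus\{l\})$ --- but it diverges in both halves in ways worth noting. For the structural step, the paper simply asserts that tightness forces the ``clique plus isolated node'' configuration (implicitly reusing the extremal analysis from the validity proof), whereas your exchange argument on the vector $(a_c)$ with $a_c=|V_c\cap T|$ actually proves that the maximizers of $\sum_c\binom{a_c}{2}$ subject to $\sum_c a_c=|T|$ and $a_c\le|T|-1$ are exactly the configurations $(|T|-1,1,0,\dots,0)$; this is more rigorous than the paper and also tightens the earlier validity proof as a byproduct. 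For the concluding step, the paper argues by contradiction: if $w(T\setminus\{l\})>W_U$, then the $(T\setminus\{l\})$-Weight-Cover inequality would itself be valid and would cap $|E_\pi(T)|=|E_\pi(T\setminus\{l\})|$ at $(|T|-2)(|T|-3)/2$, which is absurd. You instead observe directly that $T\setminus\{l\}\subseteq V_{i_1}$ and invoke nonnegativity of the weights together with the clique weight constraint $w(V_{i_1})\le W_U$ from \eqref{restr_3}. Your route is shorter and more self-contained (it does not need a second application of the Weight-Cover validity result), while the paper's contradiction argument has the mild advantage of staying entirely within the combinatorics of the Weight-Cover family. Both are sound; there is no gap in your argument.
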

\begin{proof}
Let $x$ be a root of $\mathcal{F}$ representing a $k$-partition $\pi$.
Since $E_{\pi}(T) = (|T|-1)(|T|-2)/2$, the partition $\pi$ restricted to $T$ must have two components: a
clique of size $|T|-1$ and an isolated node $l$. Denote $T' = T \backslash \{l\}$. Clearly, $E_{\pi}(T) = E_{\pi}(T')$.
Now, suppose that $w(T') > W_U$. Hence, the $T'$-Weight-Cover inequality is also valid and we obtain
$E_{\pi}(T) = E_{\pi}(T') \leq (|T|-2)(|T|-3)/2 < (|T|-1)(|T|-2)/2$ which is absurd.
\end{proof}
The previous result suggests that, in order to obtain a good Weight-Cover inequality, we should impose that
$w(T \setminus \{l\}) \leq W_U$ for all $l \in T$ (\ie $T$ is ``minimal'' with respect to the condition $w(T) > W_U$).
\begin{proposition}
Let $\mathcal{F}$ be the face defined by a $(T,i)$-Weight-Lowerbound inequality.
If $\mathcal{F} \neq \emptyset$, then $|T| \geq F_L-1$.
\end{proposition}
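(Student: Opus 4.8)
The plan is to reuse the computation already carried out in the validity proof of the $(T,i)$-Weight-Lowerbound inequality and simply read off what \emph{equality} imposes on the partition. So I would start by taking a root $x$ of $\mathcal{F}$ representing a $k$-partition $\pi=(V_1,\ldots,V_k)$, assuming without loss of generality that $i \in V_1$. As shown in that proof, the left-hand side of the inequality evaluated at $x$ equals $w(V_1) + r\,|V_1 \setminus T|$, where $r = w(T) - W_L$; hence membership of $x$ in $\mathcal{F}$ is precisely the identity $w(V_1) + r\,|V_1 \setminus T| = w(T)$. Note that $r > 0$ because $T$ was chosen with $w(T) > W_L$.

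Next I would split into the same two cases used for validity. If $V_1 \setminus T = \emptyset$, then $V_1 \subseteq T$, and together with the hypothesis $|T| \leq F_L \leq |V_1|$ this forces $T = V_1$; hence $|T| = |V_1| \geq F_L > F_L-1$ and the claim holds trivially.

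The interesting case is $V_1 \setminus T \neq \emptyset$. Then $|V_1 \setminus T| \geq 1$, and since every clique of $\pi$ respects the lower weight bound we also have $w(V_1) \geq W_L$. Substituting into the equality for $\mathcal{F}$ gives
\[ w(T) = w(V_1) + r\,|V_1 \setminus T| \geq W_L + r\,|V_1 \setminus T| \geq W_L + r = w(T), \]
so every inequality in this chain is tight; in particular $r\,|V_1 \setminus T| = r$, and since $r>0$ this yields $|V_1 \setminus T| = 1$. Therefore $|V_1| = |V_1 \cap T| + |V_1 \setminus T| = |V_1 \cap T| + 1 \leq |T| + 1$, and combining with the fact that every clique has at least $F_L$ nodes, $F_L \leq |V_1| \leq |T| + 1$, which gives $|T| \geq F_L - 1$, as claimed.

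I do not expect a genuine obstacle here: everything reduces to the expression for the left-hand side already established in the validity proof, together with the two structural facts $w(V_c) \geq W_L$ and $|V_c| \geq F_L$ valid for every clique $V_c$ of a feasible partition. The only point deserving a little care is justifying that equality in the displayed chain forces $|V_1 \setminus T| = 1$ (rather than merely $|V_1\setminus T|\ge 1$), which is exactly where the strict positivity of $r$ is used.
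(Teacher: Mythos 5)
Your proof is correct and follows essentially the same route as the paper: both rest on the identity $w(V_1) + r\,|V_1 \setminus T| = w(T)$ together with $w(V_1) \geq W_L$ and $|V_1| \geq F_L$; the paper simply solves for $s = |V_1\setminus T|$ and concludes $s \leq 1$ directly, whereas you reach the same bound via a case split and a tightness argument. No gap.
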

\begin{proof}
Let $x$ be a root of $\mathcal{F}$ representing a $k$-partition $\pi$ and suppose that $i \in V_1$.
Also, let $s = |V_1 \setminus T|$. We have $w(V_1) + rs = w(T)$.
Therefore,
$$s = \dfrac{w(T) - w(V_1)}{r} = \dfrac{w(T) - w(V_1)}{w(T) - W_L}.$$
Since $w(V_1) \geq W_L$, we have $s \leq 1$. Therefore, $|T| \geq |V_1| - s \geq |V_1| - 1 \geq F_L - 1$.
\end{proof}
This result simply suggests to consider only Weight-Lowerbound inequalities such that $|T| \geq F_L-1$.
}

Regarding the Weight-Upperbound inequalities, and for the sake of clarity, 
the roots of the faces defined by such inequalities are classified in two types.
Let $\mathcal{F} \neq \emptyset$ be the face defined by a $(T,i)$-Weight-Upperbound inequality, let $x$ be a root of $\mathcal{F}$
representing a $k$-partition $\pi$ where w.l.o.g. $i \in V_1$.
If $S \cap V_1 = \emptyset$, we say that $x$ is of \emph{Type 1}. Otherwise, $x$ is of \emph{Type 2}.
Now, define $\Pi^t$ as the set of roots of $\mathcal{F}$ of Type $t$ where $t \in \{1, 2\}$.
\begin{proposition}
Let $\mathcal{F}$ be the face defined by a $(T,i)$-Weight-Upperbound inequality.\\
(i) If $\mathcal{F} \cap \Pi^1 \neq \emptyset$, then $|T| \leq F_U$.\\
(ii) If $\mathcal{F} \cap \Pi^2 \neq \emptyset$, then $|T| \geq F_L-1$.
\label{Proposition6}
\end{proposition}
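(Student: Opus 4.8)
The plan is to run the equality-case analysis of the validity proof of the $(T,i)$-Weight-Upperbound inequality. Recall from that proof that if $x$ is a root of $\mathcal{F}$ representing a $k$-partition $\pi$ with $i\in V_1$, then the left-hand side of the inequality evaluates to $w(T\cap V_1)+w(S\cap V_1)-r\,|S\cap V_1|$, and since $x\in\mathcal{F}$ this quantity equals $w(T)$. I would then split according to the type of $x$ and, in each case, read off which of the estimates used in the validity proof are forced to be tight.

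For (i), take $x\in\mathcal{F}\cap\Pi^1$, so $S\cap V_1=\emptyset$ and the identity above collapses to $w(T\cap V_1)=w(T)$, i.e.\ $w(T\setminus V_1)=0$; using that the node weights are strictly positive, this forces $T\subseteq V_1$, whence $|T|\le|V_1|\le F_U$. For (ii), take $x\in\mathcal{F}\cap\Pi^2$, so $S\cap V_1\neq\emptyset$, and revisit the bounding chain
\[ w(T\cap V_1)+w(S\cap V_1)-r|S\cap V_1| \;\le\; w(V_1)-r|S\cap V_1| \;\le\; W_U-r|S\cap V_1| \;\le\; W_U-r \;=\; w(T) \]
from the validity proof, which uses, respectively, $(T\cup S)\cap V_1\subseteq V_1$, the weight bound $w(V_1)\le W_U$, and $|S\cap V_1|\ge 1$ together with $r=W_U-w(T)>0$. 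Since $x\in\mathcal{F}$, all three inequalities must hold with equality; the last one gives $|S\cap V_1|=1$, and the first one (again by positivity of the weights) gives $(T\cup S)\cap V_1=V_1$. As $S$ and $T$ are disjoint, $V_1$ is then partitioned into $T\cap V_1$ and $S\cap V_1$, so $|T\cap V_1|=|V_1|-1\ge F_L-1$, and therefore $|T|\ge F_L-1$.

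Since everything is a careful re-reading of the validity proof under the extra equality constraint, I do not expect a genuine obstacle. The only delicate point is the Type~2 case, where one must observe that the three separate estimates in the displayed chain are simultaneously tight and then combine $|S\cap V_1|=1$ with $(T\cup S)\cap V_1=V_1$ to pin down $|T\cap V_1|$; note that the intermediate equality $w(V_1)=W_U$ plays no role in the conclusion and is obtained for free. It is also worth flagging that the argument genuinely uses the strict positivity of the node weights.
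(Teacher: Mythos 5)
Your proposal is correct and follows essentially the same route as the paper: part (i) is identical, and for part (ii) your ``all inequalities in the validity chain must be tight'' argument yields exactly the same conclusions ($|S\cap V_1|=1$, $V_1\subseteq T\cup S$, and incidentally $w(V_1)=W_U$) that the paper derives by algebraically rearranging the equality $w((T\cup S)\cap V_1)-r|S\cap V_1|=w(T)$. Your explicit flagging of the reliance on strict positivity of the node weights is a fair observation; the paper uses it implicitly in the same steps.
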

\begin{proof}
We recall that $w((T \cup S) \cap V_1) - r |S \cap V_1| = w(T)$.\\
(i) If $S \cap V_1 = \emptyset$, we obtain $w(T \cap V_1) = w(T)$ implying that $T \subset V_1$.
Therefore, $|T| \leq |V_1| \leq F_U$.\\
(ii) If $S \cap V_1 \neq \emptyset$, add $r |S \cap V_1| - W_U$ to both sides of the equation. We obtain
$w((T \cup S) \cap V_1) - W_U = w(T) - W_U + r |S \cap V_1| = -r + r |S \cap V_1| = r(|S \cap V_1| - 1)$.
Since $|S \cap V_1| \geq 1$, $w((T \cup S) \cap V_1) - W_U \geq 0$ and, therefore, $w((T \cup S) \cap V_1) \geq W_U$.
On the other hand, $w(V_1) \leq W_U$ implying that $w((T \cup S) \cap V_1) = W_U$.
Hence, $w(V_1) = W_U$, $V_1 \subset T \cup S$ and $|S \cap V_1| = 1$.
Let $s$ be the unique element from $S \cap V_1$. Then, $V_1 \setminus \{s\} \subset T$.
Therefore, $|T| \geq |V_1|-1 \geq F_L-1$.
\end{proof}
This result suggests to discard those Weight-Upperbound inequalities such that the condition $F_L-1 \leq |T| \leq F_U$ does not
hold.

\section{Deriving upper bounds: a tabu search} \label{SectionHeuristic}

Consider an optimization problem where, given a graph $G = (V, E)$ and a number $k$, the objective is to obtain a partition
$(V_1, \ldots, V_k)$ of the set of nodes such that $||V_i|-|V_j|| \leq 1$ for all $i \neq j$, and to minimize the number of edges
in $\bigcup_{i=1}^k E(V_i)$. This problem, called $k$-ECP, is iteratively used by the state-of-the-art tabu search algorithm
\textsc{TabuEqCol} for solving the \emph{Equitable Coloring Problem} \citep{TABUEQCOL,TABUIMPROVED}.

The $k$-ECP is very related to the problem presented in this paper. In fact, it is a particular case of B$k$-WWC: simply
consider a complete graph $G' = (V, E')$, $W_L = W_U = 0$, $w_i = 0$ for all $i \in V$, $d_{ij} = 1$ for all $\{i,j\} \in E$ and
$d_{ij} = 0$ for all $\{i,j\} \in E' \backslash E$.
\reviewd{In this section, we propose a two-phase algorithm based on \textsc{TabuEqCol} for solving B$k$-WWC which incorporates an additional
mechanism to deal with weights.}

\emph{Tabu search} is a metaheuristic method proposed by \citet{GLOVER}. Basically, it is a local search algorithm which is equipped with
additional mechanisms that prevent from visiting a solution twice and getting stuck in a local optimum.
The design of a tabu search algorithm involves to define the search space of feasible solutions, an objective function, the neighborhood of
each solution, the stopping criterion, the aspiration criterion, the features and how to choose one of them to be stored in the tabu list and
how to compute the tabu tenure. The reader is referred to the work by \citet{TABUEQCOL} for the definitions of these concepts and how to denote them.

Below, the details of our algorithm are presented:
\begin{itemize}
\item \emph{Search space of solutions}. A solution $s$ is a partition $(V_1, \ldots, V_k)$ of the set of nodes
such that $||V_i|-|V_j|| \leq 1$ for all $i \neq j$.
For the sake of efficiency, solutions are stored in memory as tuples $(V_1, \ldots, V_k, W_1, \ldots, W_k, R^+, R^-)$
where $W_i = \sum_{v \in V_i} w_v$, $R^+ = \{ i : |V_i| = \lfloor n/k \rfloor+1\}$ and $R^- = \{ i : |V_i| = \lfloor n/k \rfloor\}$.
\item \emph{Objective function}. For a given solution $s$, let $d(s)$ be the sum of the distances of every edge in $E(V_i)$ for all
$i \in \{1,\ldots,k\}$ and $I(s) = \{ i : W_i < W_L \lor W_i > W_U\}$. The objective function is defined as $f(s) = d(s) + M|I(s)|$
where $M$ is a big value. Note that solutions satisfying $I(s) \neq \varnothing$ are feasible but penalized in $f(s)$.
\item \emph{Stopping criterion}. The algorithm stops when a maximum number of iterations is reached.
\item \emph{Aspiration criterion}. Let $s$ be the current solution and $s^*$ be the best known solution so far. When $f(s) < f(s^*)$,
$s$ replaces $s^*$.
\item \emph{Set of features}. A solution $s$ presents a feature $(v, i)$ if and only if $v \in V_i$.
\item \emph{Initial solution}. For all $i \in \{1,\ldots,k\}$, do $V_i = \{ v_j : (j-1) \mod k = i-1 \}$.
\item \emph{Neighborhood of a solution}. For a given solution $s = (V_1, \ldots, V_k$, $W_1, \ldots, W_k$, $R^+, R^-)$, a neighbor
$s'  = (V'_1, \ldots, V'_k$, $W'_1, \ldots, W'_k$, $R'^+, R'^-)$ of $s$ is generated with two schemes:
\begin{itemize}
\item \emph{1-move} (only applicable when $n$ does not divide $k$). For a given $v \in V_i$ such that $i \in R^+$ and a given $j \in R^-$,
consider $s'$ such that node $v$ is moved from $V_i$ to $V_j$. Formally,
$V'_j = V_j \cup \{ v\}$, $W'_j = W_j + w_v$, $V'_i = V_i \backslash \{v\}$, $W'_i = W_i - w_v$,
$V'_l = V_l$ and $W'_l = W_l$ for all $l \in \{1,\ldots,k\} \backslash \{i, j\}$,
$R'^+ = R^+ \cup \{j\} \backslash \{i\}$ and $R'^- = R^- \cup \{i\} \backslash \{j\}$.
\item \emph{2-exchange}. For a given $v \in V_i$ and $u \in V_j$ such that $i < j$, consider $s'$ such that
$v$ is moved to $V_j$ and $u$ is moved to $V_i$. Formally,
$V'_j = (V_j \backslash \{u\}) \cup \{v\}$, $W'_j = W_j - w_u + w_v$,
$V'_i = (V_i \backslash \{v\}) \cup \{u\}$, $W'_i = W_i - w_v + w_u$,
$V'_l = V_l$ and $W'_l = W_l$ for all $l \in \{1,\ldots,k\} \backslash \{i, j\}$,
$R'^+ = R^+$ and $R'^- = R^-$.
\end{itemize}
\item \emph{Selection of a feature to add in the tabu list}. Once a movement from $s$ to $s'$ is performed,
$(v,i)$ is stored on the tabu list.
\item \emph{Tabu tenure}. Once an element is added to the tabu list, it remains there for the next $t$ iterations,
where $t$ is an integer randomly generated with a uniform distribution (one of the criteria used by \citet{TABUIMPROVED}).
\end{itemize}
Since one pretends the algorithm to be as fast as possible, the value of $f(s')$ should be computed by adding or subtracting
the corresponding difference to $f(s)$. Also, $M$ should not be too high in order to avoid round-off errors. In our case, $M$
was set to $10000$.

\reviewd{A difference between \textsc{TabuEqCol} and our algorithm is that we are interested in feasible solutions for the B$k$-WCC
but \textsc{TabuEqCol} does not contemplate weight constraints. For that reason, the entire process is divided in two stages.
The first one consists of searching a solution $s$ with $I(s) = \varnothing$ while the second one is focused on minimizing $d(s)$.}
We observed that, during the first stage, the search needs to be more diversified.
Therefore,  different range of values for the tabu tenure in each stage are used.
If $f(s) \geq M$ (1st. stage) then $t \in [5,40]$ and if $f(s) < M$ (2nd. stage) then $t \in [5,20]$.

This method can also be used for obtaining feasible solutions of GB$k$-WWC: \reviewd{simply consider $d_{ij} = M$ for those edges
$\{i,j\} \notin E$; if $f(s) < M$ then $s$ is a feasible solution of GB$k$-WWC.}
However, if the density of edges in the graph $G$ is not high,
it would be convenient to exploit the structure of $G$ in the computation of the neighborhood of a solution
and tabu tenure, as in the case of \textsc{TabuEqCol}.

\section{Computational experiments} \label{SectionPracticalExperience}

In this section, some computational experiments are presented. They consist of comparing different ways of solving the B$k$-WWC, called
\emph{Strategies} and denoted by $\mathcal{S}$. Comparisons are carried out over random instances and, at the end, the resolution of a real-world instance is addressed.
The improvement in the results are shown incrementally. That is, each strategy outperforms the previous one.
All the experiments were carried out on a machine equipped with an Intel i5 2.67GHz, 4Gb of memory, the operating system Ubuntu 16.4 and
the solver GuRoBi 6.5. All instances as well as the source code of the implementation can be downloaded from:

\begin{center}
\texttt{http://www.fceia.unr.edu.ar/$\sim$daniel/stuff/football.zip}
\end{center}

Random instances were generated by computing the coordinates $(x,y)$ of $n$ points with an uniform distribution in the domain
$[-100, 100] \times [-100, 100]$. Then, for every pair of points $i$, $j$, $d_{ij}$ is assigned the euclidean distance between both points.
Weights $w_i$ are random values generated with a uniform distribution in the range $[0.1, 0.9]$,
and $W_L = \mu (n / k) - \sigma$, $W_U = \mu (n / k) + \sigma$ where $\mu$ and $\sigma$ are the average and standard deviation of the
weights of all points.
Combinations of $n$ are $k$ were chosen so that $k$ does not divide $n$, similar to those values of real
instances.

Results of the experiments are summarized in Tables \ref{tab:1} to \ref{tab:5}, whose
format are as follows: first and second columns display the number of the instance and its optimal
value, and remaining columns show the number of $B\&B$ nodes evaluated and the time elapsed in seconds of
the optimization. A time limit of one hour is imposed. For those instances that are not solved within
this limit, the gap percentage between the best lower and upper bound found is reported.
Last row displays the average over all instances, except for Tables \ref{tab:3} and \ref{tab:5} (marked with a
dagger) where it shows the average over those instances solved by all strategies being compared
(\ie 21, 23, 24, 26, 28, 29 and 30 for Table \ref{tab:3}; 42, 43, 45, 47, 49 and 50 for Table \ref{tab:5}).
Boldface indicates the best results.\\

\reviewd{\emph{$\F_1$ (with dummy nodes) vs. $\F_2$}.}
Strategy 1 ($\mathcal{S}_1$) consists of the resolution of $\F_1$, after addition of $k-(n\!\mod k)$ dummy
nodes, while $\mathcal{S}_2$ is the direct resolution of $\F_2$.
Both use GuRoBi at its default settings.
Instead of using a single constraint, such as \eqref{rest_ceros}, 
variables $x_{ij}$ are directly fixed to zero in a straightforward manner.
In fact, there are three cases where $x_{ij}$ are set to zero:
\begin{itemize}
\item  $i, j \in \DUMMY$ (only when dummy nodes are present, see constraint \eqref{rest_ceros}).
\item $\{i,j\} \in E' \backslash E$ (only when $G$ is not complete).
\item $w_i + w_j > W_U$ (see Corollary \ref{ZEROW}).
\end{itemize}

Note that, according to the results reported in the tables, $\mathcal{S}_2$ performs better than $\mathcal{S}_1$ for larger instances. In particular,
$\mathcal{S}_2$ solves instance 27 to optimality and reports better gaps than
$\mathcal{S}_1$ for instances 22 and 25.\\

\reviewd{\emph{Tabu search vs. GuRoBi built-in heuristics}.}
The next step is to use the tabu search method proposed in the previous section. This metaheuristic generates
good initial solutions. In particular, it gives the optimal one for almost all instances given
in the tables in a reasonable amount of iterations (the unique exception was instance 19 where it could not reach the optimal solution after 1000000 iterations, for two different seeds). Based on experimentation with
several random instances and initial seeds, we obtained a formula by linear regression for the limit in the number of iterations needed:
\[ it_{limit} = \left\lfloor e^{0.26571 n - 0.052978} \right\rfloor \]
Now, in $\mathcal{S}_3$, $it_{limit}$ iterations of the tabu search are executed and the best solution found is provided as an initial integer solution to GuRoBi.
Then, $\F_2$ is solved with GuRoBi primal heuristics turned off (Heuristics, PumpPasses and RINS parameters are set to zero).
In order to make a fair comparison, time reported in tables includes time spent by tabu search.
Clearly, $\mathcal{S}_3$ outperforms $\mathcal{S}_2$. In particular, $\mathcal{S}_3$ was able to solve
instance 25 by optimality and presents a lower gap than $\mathcal{S}_2$ for instance 22.\\

\reviewd{\emph{Addition of triangular inequalities on demand}.}
Formulation $\F_2$ (and also $\F_1$) has a $O(n^3)$ number of constraints due to triangular inequalities
(precisely, $n(n-1)(n-2)/2$). Although in presence of variables set to zero some of them become redundant
and one can omit them when performing the initial relaxation
(e.g.\! for a given $i < j < l$, if $x_{ij}=0$ then inequalities \eqref{restr_11} and \eqref{restr_12}
are redundant but \eqref{restr_13} is not), its number is still high.
Since the number of variables is $O(n^2)$ it is expected that several triangular inequalities are not
active in the fractional solution of the initial relaxation. We noticed that there exists a relationship between being active and the distances of positive variables in its support: the lower the value is
$d_{ij} + d_{jl}$, the higher the probability of the inequality $x_{ij} + x_{jl} - x_{il} \leq 1$ is
active. The following experiment reveals this relationship.

\reviewd{
Let $\mathcal{T}$ be the set of triangular inequalities and let $\tilde d(t)$ be a value assigned to each $t \in \mathcal{T}$
as follows:
$$\tilde d(t) = \begin{cases}
d_{ij} + d_{jl}, &\textrm{when $t$ is a constr. \eqref{restr_11}},\\
d_{ij} + d_{il}, &\textrm{when $t$ is a constr. \eqref{restr_12}},\\
d_{jl} + d_{il}, &\textrm{when $t$ is a constr. \eqref{restr_13}}. \end{cases}$$
We first order all triangular inequalities according to the value $\tilde d(t)$ from lowest to highest and we make an equitable partition
$I_1, I_2, \ldots, I_{10}$ of the set $\mathcal{T}$ in 10 deciles. That is, $\cup_{r=1}^{10} I_r = \mathcal{T}$ and
$|I_r| = \lceil\frac{n+r-10}{10}\rceil$ for all $r = 1,\ldots,10$ where $d(t) \leq d(t')$ for $t \in I_r$ and $t' \in I_{r+1}$.}
Then, we solve $\F_2$ without these inequalities in its initial relaxation and, whenever GuRobi obtains a solution (fractional or integer) violating some of them,
they are added to the current relaxation: if the solution is fractional, they are added as cuts and if the solution is integer, they are added as lazy constraints.
Histograms with the averages (over 10 instances of 44 nodes, each instance having $|\mathcal{T}|=39732$) of percentages of triangular
inequalities generated per decile $I_r$ are shown in Figures \ref{fig:1} and \ref{fig:2}. In the former, only those inequalities added at
root node of $B\&B$ tree are considered. In Figure \ref{fig:2}, all inequalities are counted. In particular, if the same inequality is
generated in two different nodes, it is counted twice.

Note that, at root node, most of the inequalities from $I_1$ and $I_2$ are generated. In addition, during the B\&B process, inequalities
from $I_8$, $I_9$ and $I_{10}$ are rarely violated.
Based on these observations and additional experimentation, \reviewd{we define the strategy} $\mathcal{S}_4$ as $\mathcal{S}_3$ with the following differences:
\begin{itemize}
\item Only inequalities from $I_1$ and $I_2$ are considered in the initial relaxation.
\item Each time an integer solution is found, inequalities from $I_r$ with $r \in \{3,\ldots,10\}$ are checked and violated ones are added as lazy constraints.
\item Each time a fractional solution is found, inequalities from $I_r$ with $r \in \{3,\ldots,m\}$ are checked and those that are violated by at least 0.1 units, are added as cuts. If the current node is root, $m = 10$, otherwise $m = 7$.
\item Some GuRoBi cuts are disabled (Clique, FlowCover, FlowPath, Implied, MIPSep, Network and ModK).
\end{itemize}
Observe that $\mathcal{S}_4$ behaves much better than $\mathcal{S}_3$. This strategy needs less than the half of time used by the preceding
one. In addition, it can solve instance 22 to optimality.\\

\reviewd{\emph{Separation of valid inequalities}.
Here, we experiment with additional custom separation routines embedded in our code, where 2-partition inequalities and the new families
of valid inequalities presented in Section \ref{valid_inequalities} are considered.
Two experiments are carried out, detailed below.
}

\reviewd{
\noindent Experiment 1: In this experiment, we analyze the effectiveness in terms of reduction in the number of B\&B nodes when
Weight-Cover, Weight-Lowerbound and Weight-Upperbound inequalities are used.
We also gather helpful information that is further used for the design of the separation routines.
For each family, random instances of 44 nodes are solved and, during the optimization, the inequalities satisfying the conditions given in Section \ref{valid_inequalities} are enumerated exhaustively and added when they are violated by an amount of at least $1\%$ of the r.h.s.\\
Regarding Weight-Cover inequalities, we restrict the enumeration to $|T| \in \{4,5\}$ since for $|T| = 3$ or $|T| \geq 6$ they are seldom violated.
Regarding Weight-Upperbound inequalities, we impose an additional limit of 1500 nodes since its enumeration in each node consumes a fairly
long time. This limit is reached on instances 22, 25 and 27.\\
Table \ref{tab:0} reports the total number of cuts generated and the number of B\&B nodes evaluated for each family of valid inequalities, and
the relative gap when 1500 nodes are reached (only for Weight-Upperbound).
The three columns entitled ``only GuRoBi'' display the same parameters (number of cuts, B\&B nodes and relative gap at 1500 nodes) generated by
strategy $\mathcal{S}_4$. The last three rows show the averages over all instances, the averages over instances 21, 23, 24, 26, 28, 29 and 30
(marked with a dagger), and the average of gap over instances 22, 25 and 27 (marked with a double dagger).\\
We conclude that the addition of Weight-Cover and Weight-Upperbound cuts make a substantial reduction in the number of nodes,
whereas Weight-Lowerbound cuts occur less frequently and the reduction in the number of nodes is marginal.
We also noticed that violated Weight-Cover inequalities are usually composed of nodes $i$ with high values of
the expression $w_i + \sum_{j \in V \setminus \{i\}} w_j x^*_{ij}$, where $x^*$ is the current fractional solution.
However, violated Weight-Upperbound inequalities do not seem to have an obvious structure that can be exploited.
We only consider Weight-Cover inequalities in the next experiment.
}

\reviewd{
\noindent Experiment 2: The goal is to compare the performance of $\mathcal{S}_4$ when different combinations of
separation routines are used. For each combination, random instances of 48 and 54 nodes are solved (see Tables \ref{tab:4} and \ref{tab:5}).
The execution of these routines is performed only when no triangular inequalities were generated for the current fractional solution,
denoted by $x^*$. In particular, the separation of 2-partition inequalities is based on the procedure given by \citet{GROTSCHEL} and
\citet{LABBE2}.}
\begin{itemize}
\item \emph{2-partition}.
The following procedure is repeated for each $v \in V$.
First, compute $W := \{u \in V \backslash \{v\} : x^*_{uv} \notin \mathbb{Z} \}$.
If $|W| \leq 4$, then stop. Otherwise, set $F := \emptyset$ and repeat the following 5 times, whenever possible.
Pick two random nodes $i, j$ from $W\backslash F$ and set $T := \{i,j\}$. Keep picking nodes $r$ from $W\backslash F$ such that
$x^*_{rv} - \sum_{t\in T} x^*_{rt} > 0$ and add $r$ to $T$ until no more nodes are found.
Then, check if the 2-partition inequality \eqref{part2eq} with $S=\{v\}$ and $T$ is violated and, in that case, add it as a cut.
Finally, make $F := F \cup T$ (even when the inequality is not violated).
The set $F$ (of ``forbidden nodes'') prevents from generating cuts with similar support.
\item \emph{Weight-Cover}.
First, order nodes $i \in V$ according to the value $q^*(i) := w_i + \sum_{j \in V \setminus \{i\}} w_j x^*_{ij}$ from highest to lowest,
i.e.$\!$ $V = \{v_1, v_2, \ldots, v_n\}$ such that $q^*(v_j) \geq q^*(v_{j+1})$ for all $j$.
For each $t \in \{4,5\}$ do the following.
Consider every $T$ composed of $t - 2$ nodes from $\{v_1, v_2, \ldots, v_{t+2}\}$ and 2 more nodes from $V$
(note that $|T| = t$ and the procedure explores $O(n^2)$ combinations).
If $w(T) > W_U$ and $w(T \setminus \{l\}) \leq W_U$ for all $l \in T$, check the $T$-Weight-Cover inequality.
If it is violated, add it as a cut.
\end{itemize}

\reviewd{
In the given procedures, an inequality is considered violated if the amount of violation is at least $0.1$.
As one can see from the tables, 2-partition together with Weight-Cover cuts is the best choice.
We define the strategy $\mathcal{S}_5$ as $\mathcal{S}_4$ with both separation routines enabled.\\
}

\review{\emph{Resolution of a real-world instance}. As mentioned in the introduction, in the zonal stage of the second category of the Ecuadorian football league, a championship including the two best teams of each provincial associations is played. The set of teams must be partitioned in 8 groups to play a Round Robin Tournament in each one of them.  A regulation imposes that two teams of the same provincial association must not belong to the same group. During the 2015 season,  $44$ teams (22 provincial associations) participated in the tournament, and they were divided in $4$ groups of $5$ teams and $4$ groups of $6$ teams.}
	
\review{ The nodes of a graph are associated with the venues of the teams.  We denote the nodes by $2i$ and $2i+1$, corresponding to the venues of the best two teams of each provincial association $i$, for all $i=0,1,\ldots,21$, and edge $\{i,j\}$ is included if and only if the teams associated to nodes $i,j$ could potencially belong to the same group. In order to satisfy the regulation mentioned before, edges of the form  $\{2i,2i+1\}$ do not appear in the graph. Thus, our realignment instance consists of $44$ teams which must be partitioned in $8$ groups, and the graph $G = (V,E)$ is a particular complement of a matching, \ie $V = \{0,\ldots,43\}$ and $E = \{ \{i,j\} : 0 \leq i < j \leq 43 \} \backslash \{ \{2i,2i+1\} : 0 \leq i \leq 21 \}$. 
}

For solving our instance, we made a preliminary test of our two best strategies (\ie $\mathcal{S}_4$ vs.
$\mathcal{S}_5$). A time limit of one hour was imposed to them. None of them was able to solve the instance within this limit, but the
relative gap reported was 3.22\% for $\mathcal{S}_4$ against 2.36\% for $\mathcal{S}_5$. 
Hence, $\mathcal{S}_5$ was chosen for solving the instance without time limit.
Below, we resume some highlights about the optimization:
\begin{itemize}
\item \review{\emph{Instance:} $|V|=44$, $k = 8$, $W_L = 2.08412$ and $W_U = 4.14835$. }
\item \emph{Iterations performed/time spent by tabu search}: 113352 iterations (6.8 sec.).
\item \emph{Variables and constraints of the initial relaxation:} 913 vars., 7444 constr. 
\item \emph{Cuts generated}: Gomory (18), Cover (65), MIR (1620), GUB cover (98), Zero-half (1278), triangular (1935), 2-partition (3965), Weight-Cover (14).
\item \emph{Nodes explored and total time elapsed}: 34573 (68374 sec.).
\item \emph{Optimal value}: 21523 km, found by tabu search at iter. 3549 (0.21 sec.).
\item \emph{Gap evolution:} 2.36\% after 1 hour, 1.08\% after 4 hours.
\end{itemize}
Since a Double Round Robin Tournament is considered, the total distance is 86092 km.
In contrast, the best solution found in our previous work \citep{ISCO2016} had 86192 km and the gap reported was 12.6\% after 4 hours of execution.

\section{Conclusion and future work} \label{Conclusions}

\review{In this paper, a balanced $k$-way partitioning problem  with weight constraints is defined. The problem consists in partitioning a complete or a general graph in a fixed number of subsets of nodes such that their cardinality differs at most in one and the total weight of each subset is bounded. The objective aims to minimize the total cost of edges with end-nodes in the same subset.  The problem was formulated as an integer program and several strategies based on exact and methaheuristic methods are evaluated.  The motivation to state, formulate and solve this problem was the realignment of the second category of the Ecuadorian football league. The solution of this case of study is  based on real world data and the methodology may be suitable applied to the realignment of other sports leagues with similar characteristics.}

In order to solve the problem, one of the key decisions was to use a modification of a state-of-the-art tabu search for obtaining good feasible solutions.
In particular, our algorithm found the optimal solution of the real-world instance in less than a second, whereas the previous approach given
by \citet{ISCO2016} was unable to obtain it within 4 hours of CPU time.
Another key decision was to include a portion of triangular inequalities (20\% of them under a specific ordering) in the initial relaxation
and manage the remaining ones as cuts or lazy constraints. 
These facts, in conjunction with other minor decisions, allowed to solve comfortably random instances of 48 nodes in less than half an hour
and the real-world one in 19 hours (here, almost all the time was spent on certifying the optimality). 

In addition, two formulations are presented and one of them ($\F_2$) is chosen based on experimentation.
A possible cause of the poor performance of $\F_1$ could be the existence of symmetrical solutions due to the indistinguishability
of dummy nodes.
For example, an addition of 4 dummy nodes implies that, for each integer solution of $\F_2$, there are 24 symmetrical integer solutions in $\F_1$.

\reviewd{
We also proposed three families of valid inequalities and two of them have proven to be very effective for reducing the number of B\&B nodes, when they are used as cuts.
One of them (Weight-Cover) in conjunction with the well-known 2-partition inequalities, allows to shorten the CPU time in 51\%
for $n = 48$ (see Table \ref{tab:4}) and 56\% for $n = 54$ (see Table \ref{tab:5}). Moreover, it is able to solve one more instance (48) and
the gap reported for those instances not solved in one hour (41, 44, 46) is smaller.
}\review{As other state-of-the-art exact algorithms for the $k$-way graph partitioning problem \cite{Fairbrother_EtAl_2017, Anjos2013}, the best strategies provided here attain optimal solutions for graphs that have around 50 nodes.}

A future work could be \reviewd{to include a separation routine of Weight-Upperbound inequalities and to explore other valid inequalities
(for example, by adapting those ones presented by \citet{LABBE2}).
Finally, at the theoretical level, it could be useful to make a polyhedral study of $\mathcal{P}$, the convex hull of integer solutions of
$\F_2$, and to propose facet-defining inequalities that can be used as cuts.
}

\begin{acknowledgements}
This research was partially supported by the 15-MathAmSud-06
``PACK-COVER: Packing and covering, structural aspects'' trilateral cooperation project.
\end{acknowledgements}

\bibliographystyle{spbasic}
\bibliography{biblio}

\newpage

\begin{figure}
\includegraphics[scale=0.55]{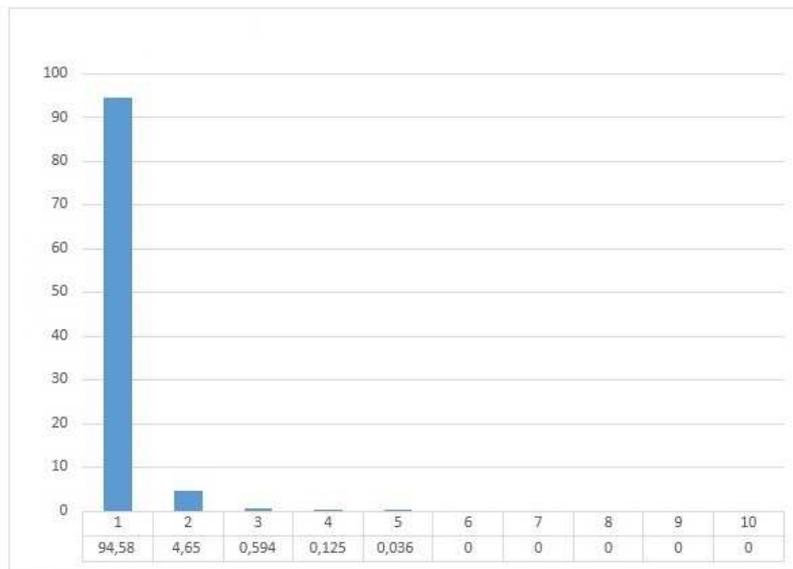}
\caption{Average of \% of triangular inequalities added at root node}
\label{fig:1}
\end{figure}

\begin{figure}
\includegraphics[scale=0.55]{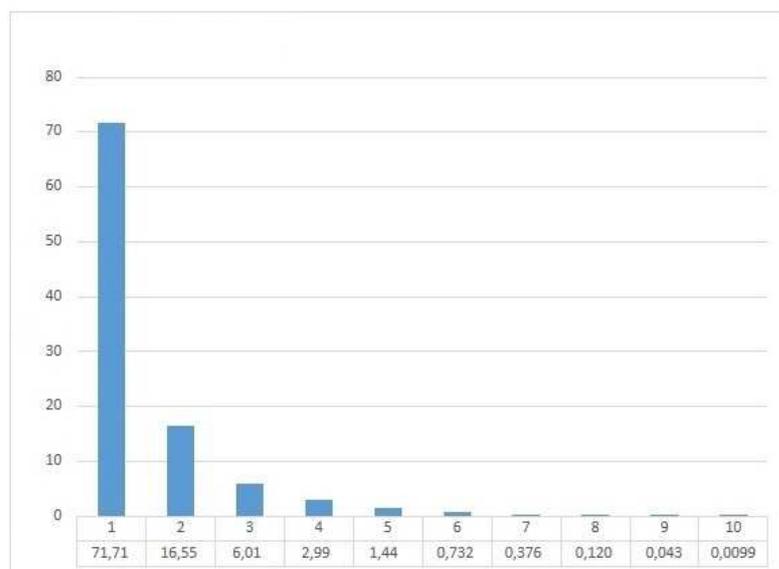}
\caption{Average of \% of total triangular inequalities added}
\label{fig:2}
\end{figure}

\begin{table} \footnotesize
\caption{Exhaustive generation of new cuts ($n=44$ and $k=8$)}
\label{tab:0}
\begin{tabular}{|c|c@{ }c@{ }c||c@{ }c|c@{ }c|c@{ }c@{ }c|}
\hline
     & \multicolumn{3}{c||}{only GuRoBi} & \multicolumn{2}{c|}{Weight-Cover} & \multicolumn{2}{c|}{W.-Lowerb.} & \multicolumn{3}{c|}{Weight-Upperbound} \\
N. & Nodes & Cuts & Gap1500 & Nodes & Cuts & Nodes & Cuts & Nodes & Cuts & Gap1500 \\
\hline
21 & 322 & 280 & & 255 & 34 & 281 & 17 & 415 & 58 &   \\
22 & 11037 & 782 & 4.71\% & 10290 & 146 & 12595 & 46 &  &  & 3.58\% \\
23 & 545 & 365 & & 519 & 69 & 629 & 16 & 592 & 208 &  \\
24 & 1492 & 576 &  & 1023 & 82 & 1465 & 27 & 1427 & 123 &  \\
25 & 5413 & 691 & 3.65\% & 6882 & 182 & 6126 & 120 &  &  & 4.10\% \\
26 & 574 & 379 & & 317 & 36 & 584 & 5 & 665 & 149 &  \\
27 & 5360 & 555 & 5.52\% & 3120 & 148 & 5219 & 98 &  &  & 4.34\% \\
28 & 50 & 84 & & 24 & 25 & 50 & 0 & 29 & 20 &  \\
29 & 1518 & 407 & & 659 & 37 & 1606 & 49 & 631 & 77 &  \\
30 & 123 & 153 & & 57 & 18 & 75 & 5 & 84 & 73 &  \\
\hline
Av. & 2643.4 & 427.2 & & 2314.6 & 77.7 & 2863.0 & 38.3 &  &  &  \\
Av.$^{\dagger}$ & 660.6 & 320.6 &  &  &  &  &  & 549.0 & 101.1 &  \\
Av.$^{\dagger\dagger}$ &  &  & 4.63\% &  &  &  &  &  &  & 4.01\% \\
\hline
\end{tabular}
\end{table}

\begin{table}
\caption{Random instances with $n=34$ and $k=6$}
\label{tab:1}
\begin{tabular}{|c|c||c@{ }c|c@{ }c|c@{ }c|c@{ }c|}
\hline
       & Opt.  & \multicolumn{2}{c|}{Strategy 1} & \multicolumn{2}{c|}{Strategy 2} & \multicolumn{2}{c|}{Strategy 3} & \multicolumn{2}{c|}{Strategy 4} \\
N. & Value & Nodes & Time & Nodes & Time & Nodes & Time & Nodes & Time \\
\hline
1 & 3669.6 & 386 & 81 & 353 & 85 & 353 & 47 & 334 & 15 \\
2 & 3653.6 & 760 & 134 & 989 & 357 & 453 & 33 & 485 & 18 \\
3 & 3559.3 & 49 & 16 & 179 & 28 & 25 & 12 & 37 & 4 \\
4 & 3950.8 & 542 & 125 & 695 & 137 & 471 & 58 & 472 & 24 \\
5 & 3973.5 & 149 & 47 & 152 & 43 & 191 & 34 & 191 & 10 \\
6 & 3664.5 & 217 & 37 & 492 & 71 & 267 & 28 & 81 & 6 \\
7 & 3487.8 & 308 & 55 & 641 & 124 & 311 & 38 & 251 & 14 \\
8 & 3293.1 & 297 & 36 & 197 & 26 & 220 & 23 & 228 & 8 \\
9 & 3937.4 & 159 & 67 & 281 & 58 & 243 & 41 & 197 & 14 \\
10 & 3841.7 & 644 & 141 & 518 & 100 & 472 & 62 & 355 & 19 \\
\hline
Av. & & 351.1 & 73.9 & 449.7 & 102.9 & 300.6 & 37.6 & \textbf{263.1} & \textbf{13.2} \\
\hline
\end{tabular}
\end{table}

\begin{table}
\caption{Random instances with $n=38$ and $k=7$}
\label{tab:2}
\begin{tabular}{|c|c||c@{ }c|c@{ }c|c@{ }c|c@{ }c|}
\hline
       & Opt.  & \multicolumn{2}{c|}{Strategy 1} & \multicolumn{2}{c|}{Strategy 2} & \multicolumn{2}{c|}{Strategy 3} & \multicolumn{2}{c|}{Strategy 4} \\
N. & Value & Nodes & Time & Nodes & Time & Nodes & Time & Nodes & Time \\
\hline
11 & 3344.4 & 188 & 100 & 30 & 33 & 80 & 29 & 18 & 7 \\
12 & 3456.8 & 498 & 272 & 618 & 117 & 115 & 23 & 124 & 8 \\
13 & 3361.7 & 284 & 133 & 213 & 56 & 278 & 46 & 220 & 14 \\
14 & 3646.4 & 709 & 286 & 395 & 68 & 384 & 45 & 288 & 14 \\
15 & 3384.7 & 1816 & 2434 & 1614 & 250 & 1469 & 188 & 1346 & 69 \\
16 & 3538.2 & 464 & 314 & 264 & 51 & 261 & 38 & 269 & 14 \\
17 & 3366.4 & 1834 & 1707 & 2455 & 485 & 2580 & 352 & 2962 & 198 \\
18 & 3565.2 & 360 & 218 & 213 & 42 & 134 & 32 & 67 & 11 \\
19 & 3029 & 100 & 73 & 301 & 39 & 317 & 37 & 148 & 11 \\
20 & 3846.6 & 1399 & 915 & 967 & 165 & 968 & 129 & 1038 & 63 \\
\hline
Av. & & 765.2 & 645.2 & 707 & 130.6 & 658.6 & 91.9 & \textbf{648} & \textbf{40.9} \\
\hline
\end{tabular}
\end{table}

\begin{table}
\caption{Random instances with $n=44$ and $k=8$}
\label{tab:3}
\begin{tabular}{|c|c||c@{ }c|c@{ }c|c@{ }c|c@{ }c|}
\hline
       & Opt.  & \multicolumn{2}{c|}{Strategy 1} & \multicolumn{2}{c|}{Strategy 2} & \multicolumn{2}{c|}{Strategy 3} & \multicolumn{2}{c|}{Strategy 4} \\
N. & Value & Nodes & Time & Nodes & Time & Nodes & Time & Nodes & Time \\
\hline
21 & 3706.1 & 624 & 799 & 445 & 145 & 421 & 127 & 322 & 46 \\
22 & 3896 & \multicolumn{2}{c|}{4.65\%} & \multicolumn{2}{c|}{3.37\%} & \multicolumn{2}{c|}{2.55\%} & 11037 & 1721 \\
23 & 3692.7 & 936 & 977 & 1058 & 302 & 683 & 157 & 545 & 53 \\
24 & 3778.7 & 1756 & 2235 & 1933 & 673 & 1784 & 508 & 1492 & 162 \\
25 & 4228 & \multicolumn{2}{c|}{3.32\%} & \multicolumn{2}{c|}{0.79\%} & 5083 & 2553 & 5413 & 894 \\
26 & 3844.3 & 908 & 808 & 856 & 276 & 677 & 171 & 574 & 67 \\
27 & 3731.2 & \multicolumn{2}{c|}{4.06\%} & 5593 & 2318 & 3943 & 1654 & 5360 & 721 \\
28 & 3567.7 & 53 & 60 & 114 & 38 & 43 & 49 & 50 & 21 \\
29 & 4205.5 & 1379 & 2576 & 1312 & 866 & 1473 & 835 & 1518 & 262 \\
30 & 3893.2 & 156 & 175 & 165 & 53 & 135 & 68 & 123 & 33 \\
\hline
Av.$^{\dagger}$ &  & 830.3 & 1090 & 840.4 & 336.1 & 745.1 & 273.6 & \textbf{660.6} & \textbf{92} \\
\hline
\end{tabular}
\end{table}

\begin{table}
\caption{Random instances with $n=48$ and $k=9$}
\label{tab:4}
\begin{tabular}{|c|c||c@{ }c|c@{ }c|c@{ }c|c@{ }c|}
\hline
       & Opt.  & \multicolumn{2}{c|}{Strategy 4} & \multicolumn{2}{c|}{+ 2-partition} & \multicolumn{2}{c|}{+ Weight-Cover} & \multicolumn{2}{c|}{+ Both} \\
N. & Value & Nodes & Time & Nodes & Time & Nodes & Time & Nodes & Time \\
\hline
31 & 3909.5 &	2985 & 386 & 1857 & 311 & 2773 & 368 & 1665 & 298 \\
32 & 3884.7	&	213 & 52 & 243 & 58 & 265 & 51 & 232 & 52 \\
33 & 3792.7	& 3376 & 329 & 2372 & 296 & 3221 & 295 & 2296 & 272 \\
34 & 4264.9	&	1945 & 622 & 2092 & 659 & 1986 & 478 & 2097 & 588 \\
35 & 3785.3	&	525 & 113 & 453 & 115 & 449 & 80 & 425 & 80 \\
36 & 3701.5	&	217 & 87 & 198 & 86 & 243 & 64 & 158 & 57 \\
37 & 4000.5	&	2542 & 729 & 2406 & 512 & 1234 & 234 & 1633 & 267 \\
38 & 4885.9	& 10132 & 2038 & 6226 & 2055 & 7294 & 1652 & 4902 & 1233 \\
39 & 3895.8	&	4980 & 1295 & 2784 & 1034 & 3956 & 852 & 3552 & 884 \\
40 & 3627.2	&	122 & 65 & 115 & 49 & 100 & 45 & 72 & 43 \\
\hline
Av. & & 2703.7 & 571.6 & 1874.6 & 517.5 & 2152.1 & 411.9 & \textbf{1703.2} & \textbf{377.4} \\
\hline
\end{tabular}
\end{table}

\begin{table}
\caption{Random instances with $n=54$ and $k=10$}
\label{tab:5}
\begin{tabular}{|c|c||c@{ }c|c@{ }c|c@{ }c|c@{ }c|}
\hline
       & Opt.  & \multicolumn{2}{c|}{Strategy 4} & \multicolumn{2}{c|}{+ 2-partition} & \multicolumn{2}{c|}{+ Weight-Cover} & \multicolumn{2}{c|}{+ Both} \\
N. & Value & Nodes & Time & Nodes & Time & Nodes & Time & Nodes & Time \\
\hline
41 & & \multicolumn{2}{c|}{2.33\%} & \multicolumn{2}{c|}{2.29\%} & \multicolumn{2}{c|}{2.34\%} & \multicolumn{2}{c|}{1.77\%} \\
42 & 3891.9 & 1673 & 798 & 1770 & 889 & 964 & 782 & 1931 & 767 \\
43 & 4153	& 7688 & 3221 & 6336 & 3309 & 7490 & 3033 & 6409 & 2402 \\
44 & & \multicolumn{2}{c|}{2.06\%} & \multicolumn{2}{c|}{2.05\%} & \multicolumn{2}{c|}{1.56\%} & \multicolumn{2}{c|}{1.79\%} \\
45 & 4321.1 & 4115 & 1916 & 3676 & 1869 & 3305 & 838 & 2570 & 578 \\
46 & & \multicolumn{2}{c|}{1.67\%} & \multicolumn{2}{c|}{2.10\%} & \multicolumn{2}{c|}{1.65\%} & \multicolumn{2}{c|}{1.62\%} \\
47 & 4203.2 & 9081 & 3519 & 5191 & 3043 & 6524 & 2869 & 4824 & 2084 \\
48 & 4039.1	& \multicolumn{2}{c|}{1.31\%} & \multicolumn{2}{c|}{1.60\%} & \multicolumn{2}{c|}{1.87\%} & 7797 & 3173 \\
49 & 4270.9	& 3337 & 972 & 3453 & 1017 & 2865 & 841 & 2716 & 664 \\
50 & 3916.5	& 5103 & 1640 & 3563 & 1654 & 3977 & 1529 & 3529 & 1247 \\
\hline
Av.$^{\dagger}$ & & 5166.2 & 2011.0 & 3998.2 & 1963.5 & 4187.5 & 1648.7 & \textbf{3663.2} & \textbf{1290.3} \\
\hline
\end{tabular}
\end{table}

\end{document}